\documentclass[aps,pre,twocolumn,amsmath,amssymb,floatfix]{revtex4-2}

\usepackage{graphicx}
\usepackage{bm}
\usepackage{booktabs}
\usepackage{amsthm}
\usepackage[colorlinks=true,linkcolor=blue,citecolor=blue,urlcolor=blue]{hyperref}

\newtheorem{proposition}{Proposition}
\newtheorem{theorem}{Theorem}

\newcommand{\Az}{\mathrm{AZ}}
\newcommand{\Z}{\mathbb{Z}}

\begin{document}

\title{Parity-dependent sublattice structure of local $2\times2$ correlations
in domino tilings of the Aztec diamond}

\author{Chaeyeon Kim}
\author{Suhwan Yoo}
\author{Jeehoon Choi}
\author{Hee Young Kwon}
\author{Changyeon Won}
\email{cywon@khu.ac.kr}
\affiliation{Department of Physics, Kyung Hee University, Seoul 02447, Republic of Korea}

\date{\today}

\begin{abstract}
We study local $2\times2$ correlations in uniformly random domino tilings of the
Aztec diamond $\Az(n)$, a dimer model with frozen corners and a liquid region
bounded by the Arctic circle, and compare them with the square chessboard.  For
every elementary square the probabilities $P_H$ and $P_V$ of the two coverings
by a pair of parallel dominoes satisfy $P_H=P_V=P_1P_2+P_3P_4$, where
$P_1,\dots,P_4$ are the probabilities of the four surrounding edges.  This
known identity, a case of Kuo's graphical condensation, reduces the spatial
profile of $P_H$ to single-edge probabilities, which we compute exactly from
the inverse Kasteleyn matrix for all squares of $\Az(n)$.  The squares form two
interleaved sublattices, distinguished by the color of their lower-left cell.
The parity of $n$ decides which sublattice dominates near the center: the one
containing the central square for odd $n$, and the other one for even $n$.
This sign rule holds for the averaged sublattice contrast for all
$4\le n\le22$.  It also fixes the location of the global maximum of $P_H$,
which lies at or next to the center for odd $n$ and several lattice spacings
away for even $n$.  Within the central sublattice, the
preferred square follows a mod-4 pattern that matches the exactly pinned mean
height of the center.  On the $L\times L$ chessboard the same rule holds with $n$
replaced by the half-width $L/2$, but the contrast is two to six times weaker,
the central edge probability never equals $1/4$, and the maxima of $P_H$ are
pinned to the corners.  The exact $1/4$ values and the interior,
parity-selected maxima are therefore features of the Aztec geometry.
\end{abstract}

\maketitle

\section{Introduction}\label{sec:intro}

Dimer models on bipartite lattices are among the few exactly solvable models of
classical statistical mechanics~\cite{Kasteleyn61,TF61,Fisher61}.  Dimer
configurations describe the dense packing of diatomic molecules adsorbed on a
lattice~\cite{FR37}.  They also arise as the constrained low-energy manifolds of
frustrated magnets~\cite{Henley10,CMS12}, including the resonating-valence-bond
description of quantum magnets~\cite{RK88}.  In artificial spin
ice~\cite{Wang06,NMS13}, vertex-frustrated geometries such as the Shakti
lattice~\cite{MNN13,CMN13,Gilbert14} have low-energy states that map onto dimer
covers of the square lattice, and their kinetics have been imaged
directly~\cite{Lao18}.  In all these settings, local dimer arrangements encode
the correlation functions of the system~\cite{FS63}.

Domino tilings of the Aztec diamond $\Az(n)$ are a dimer model on a finite
region of the square lattice.  The region has $2^{n(n+1)/2}$
tilings~\cite{EKLP92a,EKLP92b}, and exact samples are generated by domino
shuffling~\cite{EKLP92a,Propp03}.  Its boundary conditions induce a macroscopic
phase separation~\cite{JPS95,CKP01,KOS06}.  Near the four corners the tiling is
frozen into a brick pattern, while inside the Arctic circle it is a fluctuating
liquid.  The fluctuations of the interface are governed by the Airy
process~\cite{Johansson05}, and the interior is described by an inhomogeneous
Gaussian free field~\cite{Kenyon01,ADSV16}.  Local statistics have been studied
asymptotically~\cite{CEP96,CJY15}.  At finite size, the probability of any
local pattern is a determinant built from the inverse Kasteleyn
matrix~\cite{Kasteleyn61,Kenyon97,KenyonLectures}, and for the Aztec diamond
Helfgott gave the relevant coupling function explicitly in terms of Krawtchouk
polynomials~\cite{Helfgott98}.

This paper examines local $2\times2$ correlations in this setting.  An
elementary $2\times2$ square admits two coverings by a pair of parallel
dominoes, horizontal ($H$) or vertical ($V$).  Several exact facts about them
are known; they are collected in Sec.~\ref{sec:known}.  The identity
$P_H=P_V=P_1P_2+P_3P_4$ holds on every four-cycle face of every planar
bipartite graph~\cite{Propp03,Kuo04}.  The central edge probability has an exact
mod-4 structure~\cite{Schonfelder25}.  The central $2\times2$ probabilities, and
the algebraic form of $P_H$ at an arbitrary position, follow from counts of
Aztec diamonds with a $2\times2$ hole~\cite{Ciucu97,Schonfelder25}.  What these
results do not address is the spatial structure of $P_H$: where it is large,
and how this depends on $n$.

Our results, presented in Sec.~\ref{sec:results}, are numerical and are of
three kinds.  First, the elementary squares form two interleaved sublattices.
Near the center the sublattice that contains the central square dominates for
odd $n$ and the other one for even $n$, for every $n$ computed.  Second, this
sign rule fixes the location of the global maximum of $P_H$, which lies at or
next to the center for odd $n$ and several lattice spacings away for even $n$.
Third, a comparison with the $L\times L$ chessboard shows that the same rule
holds there with $n$ replaced by the half-width $L/2$.  On the chessboard,
however, the contrast is much weaker, no central value equals $1/4$, and the
maxima of $P_H$ sit at the corners.  Section~\ref{sec:discussion} discusses
what is specific to the Aztec geometry and gives a height-function reading of
the mod-4 pattern near the center; Sec.~\ref{sec:conclusion} summarizes.

\section{Background and known results}\label{sec:known}

\subsection{Aztec diamond and Kasteleyn matrix}\label{subsec:kast}

The Aztec diamond of order $n$ is the set of cells
\begin{equation}
\Az(n)=\left\{(i,j)\in\Z^2:\ \left|i+\tfrac12\right|+\left|j+\tfrac12\right|\le n\right\},
\end{equation}
where the cell $(i,j)$ is the unit square $[j,j+1]\times[i,i+1]$; thus $i$
labels rows (increasing upward) and $j$ labels columns.  $\Az(n)$ consists of
$2n(n+1)$ cells.  We color $(i,j)$ black if $i+j$ is even and white otherwise.
A domino is an edge joining adjacent black and white cells, and a domino tiling
is a perfect matching of the cell-adjacency graph [Fig.~\ref{fig:arctic}(a)].
Figure~\ref{fig:arctic}(b) shows a uniformly random tiling of a large diamond,
and Fig.~\ref{fig:arctic}(c) the frozen/liquid structure obtained from the
edge probabilities introduced below.

\begin{figure*}[t]
  \centering
  \includegraphics[width=\textwidth]{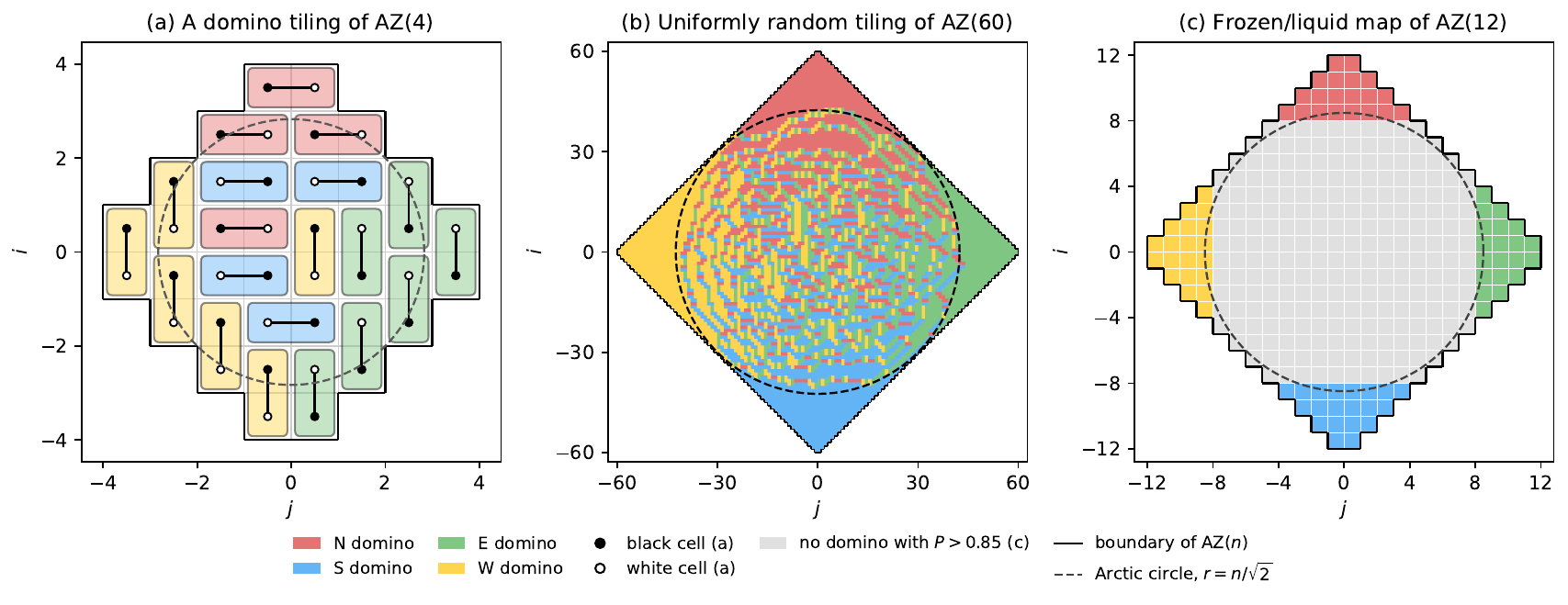}
  \caption{Domino tilings of the Aztec diamond.
    (a)~One domino tiling of $\Az(4)$.  Filled and open dots mark black and
    white cells; each domino is drawn with the edge joining its two cells, so
    that the tiling is a perfect matching of the cell-adjacency graph.
    (b)~A uniformly random tiling of $\Az(60)$, sampled exactly by domino
    shuffling~\cite{EKLP92a,Propp03}.  The four frozen corners, each filled by
    a single domino type, meet the liquid region along the Arctic
    circle~\cite{JPS95}.
    (c)~Frozen/liquid map of $\Az(12)$ from the inverse Kasteleyn matrix.  Each
    cell is colored by the type of its most likely domino when that probability
    exceeds $0.85$, and gray otherwise.
    A horizontal domino whose left cell $(i,j)$ has $i+j+n$ even (odd) is of
    type N (S), and a vertical domino whose lower cell has $i+j+n$ even (odd)
    is of type E (W).  In all panels the staircase line is the boundary of
    $\Az(n)$ and the dashed circle is the Arctic circle of radius $n/\sqrt2$.}
  \label{fig:arctic}
\end{figure*}

We index the Kasteleyn matrix $K$ by black cells (rows) and white cells
(columns) and use the phases
\begin{equation}\label{eq:K}
  K[b,w]=
  \begin{cases}
    +1, & w=b+(0,1)\ \text{(right)},\\
    -1, & w=b+(0,-1)\ \text{(left)},\\
    +i, & w=b+(1,0)\ \text{(up)},\\
    -i, & w=b+(-1,0)\ \text{(down)},
  \end{cases}
\end{equation}
with $K[b,w]=0$ for nonadjacent cells.  For the two black cells $b_1,b_2$ and
the two white cells $w_1,w_2$ of every elementary square, labeled as in
Sec.~\ref{subsec:local}, the phases satisfy the Kasteleyn condition
\begin{equation}\label{eq:kasteleyn-sign}
  K[b_1,w_1]K[b_2,w_2]=-K[b_1,w_2]K[b_2,w_1].
\end{equation}
The single-edge and two-edge probabilities are then~\cite{Kasteleyn61,Kenyon97}
\begin{gather}
  P(b\leftrightarrow w)=K[b,w]\,K^{-1}[w,b], \label{eq:single}\\
  \frac{P(e_1\wedge e_2)}{K[b_1,w_1]K[b_2,w_2]}
  =\det\begin{pmatrix}
   K^{-1}_{w_1b_1} & K^{-1}_{w_1b_2}\\
   K^{-1}_{w_2b_1} & K^{-1}_{w_2b_2}
  \end{pmatrix}, \label{eq:dpp2}
\end{gather}
where $e_r$ denotes the edge $b_r\leftrightarrow w_r$ and $K^{-1}_{wb}\equiv K^{-1}[w,b]$.

\subsection{Elementary squares and the local identity}\label{subsec:local}

The elementary square with corner cell $(a,b)$ is $\{a,a+1\}\times\{b,b+1\}$.
We call it black-anchored if $a+b$ is even and white-anchored if $a+b$ is odd.
In a black-anchored square we write $b_1=(a,b)$, $w_1=(a,b+1)$, $w_2=(a+1,b)$
and $b_2=(a+1,b+1)$, so that $b_1w_1$ and $b_2w_2$ are the horizontal edges and
$b_1w_2$ and $b_2w_1$ the vertical ones [Fig.~\ref{fig:background}(b)]; in a
white-anchored square the cells are labeled so that the same holds.  We write
$P_1=P(b_1\leftrightarrow w_1)$, $P_2=P(b_2\leftrightarrow w_2)$,
$P_3=P(b_1\leftrightarrow w_2)$ and $P_4=P(b_2\leftrightarrow w_1)$, and $P_H$
($P_V$) for the probability that both horizontal (vertical) edges occur.  The
class of a square is preserved by all symmetries of $\Az(n)$, and the central
square, with corner cell $(-1,-1)$, is black-anchored.  The centers of the
black-anchored squares, and those of the white-anchored squares, each form a
square lattice of spacing $\sqrt2$ rotated by $45^\circ$, and the two lattices
interleave [Fig.~\ref{fig:background}(a)].

\begin{figure*}[t]
  \centering
  \includegraphics[width=0.86\textwidth]{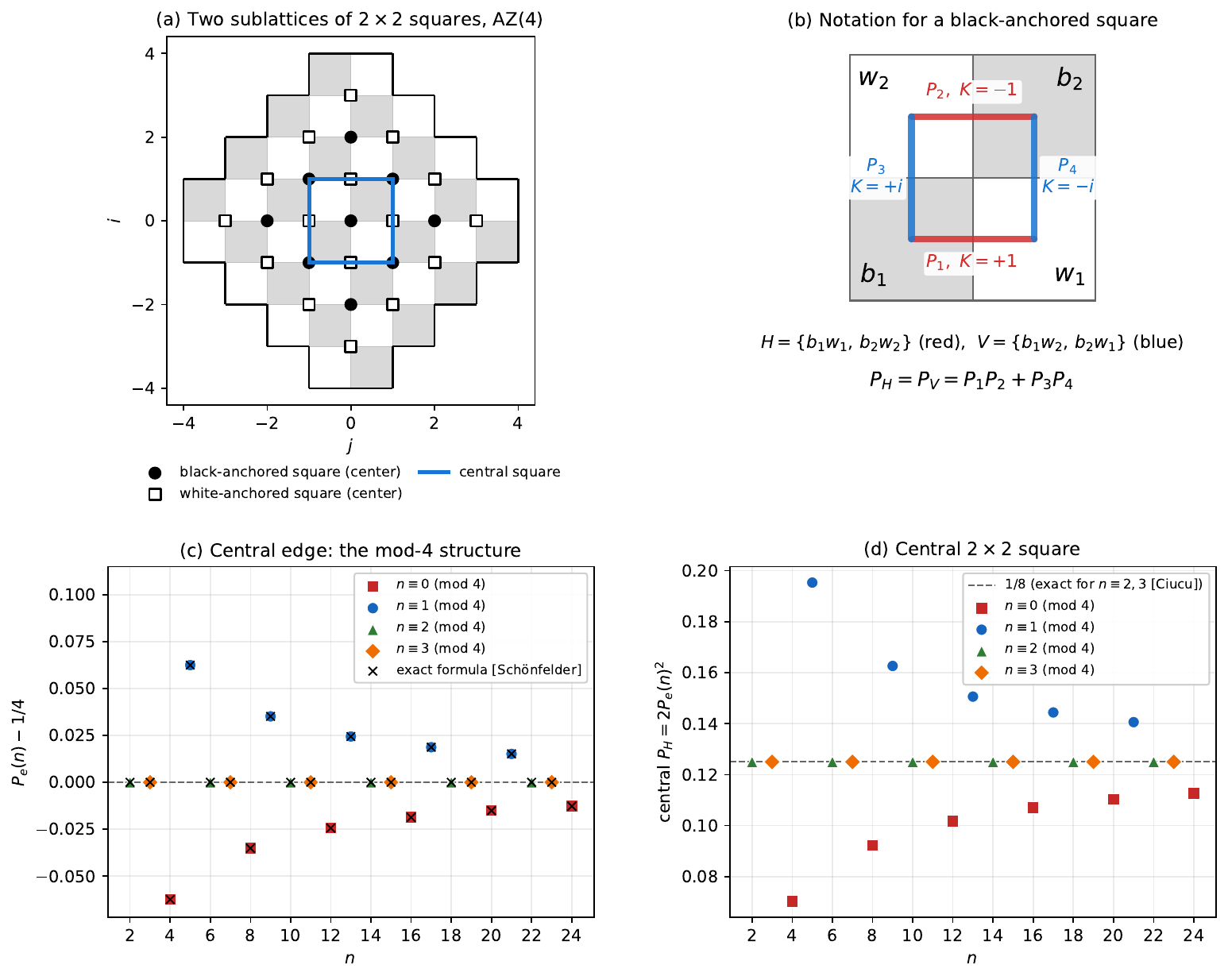}
  \caption{Notation and known results.
    (a)~The two sublattices of elementary squares in $\Az(4)$.  The center of
    each $2\times2$ square is marked by a filled circle (black-anchored) or an
    open square (white-anchored); gray cells are black.  The blue outline is
    the central square.
    (b)~Notation for a black-anchored square: edge probabilities, Kasteleyn
    phases~\eqref{eq:K}, horizontal pair $H$ (red) and vertical pair $V$
    (blue).
    (c)~Central edge probability $P_e(n)-1/4$ for $2\le n\le24$ from the
    inverse Kasteleyn matrix, grouped by $n\bmod4$, with the exact formula of
    Theorem~\ref{thm:central-edge} (crosses).
    (d)~Central $P_H=2P_e(n)^2$.  It equals $1/8$ for
    $n\equiv2,3\pmod4$~\cite{Ciucu97} and lies above (below) $1/8$ for
    $n\equiv1$ ($n\equiv0$) $\pmod4$.}
  \label{fig:background}
\end{figure*}

\begin{proposition}[Propp~\cite{Propp03}, Kuo~\cite{Kuo04}]\label{thm:local}
For every elementary square contained in $\Az(n)$,
\begin{equation}\label{eq:local}
  P_H=P_V=P_1P_2+P_3P_4.
\end{equation}
\end{proposition}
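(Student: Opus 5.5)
The plan is to use the determinantal formula \eqref{eq:dpp2} together with a local rotation (Kuo-type) argument on the face. Write $G$ for the cell-adjacency graph of $\Az(n)$, and $Z$ for its number of perfect matchings. Removing the four cells $b_1,b_2,w_1,w_2$ of the square gives a graph $G'$ with $Z'$ perfect matchings.

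The first step is to prove $P_H=P_V$ by a direct bijection. Any matching containing both horizontal edges $b_1w_1,b_2w_2$ becomes one containing both vertical edges $b_1w_2,b_2w_1$ by rotating the two dominoes inside the square, and conversely. Both sets are therefore in bijection with perfect matchings of $G'$, so $P_H=P_V=Z'/Z$.

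The second step expresses $P_H$ through \eqref{eq:dpp2}. Set $k_{rs}=K[b_r,w_s]$ and $g_{sr}=K^{-1}_{w_sb_r}$; then \eqref{eq:single} gives $P_1=k_{11}g_{11}$, $P_2=k_{22}g_{22}$, $P_3=k_{12}g_{21}$, $P_4=k_{21}g_{12}$. Expanding the determinant,
\begin{equation*}
P_H=k_{11}k_{22}(g_{11}g_{22}-g_{12}g_{21})=P_1P_2-k_{11}k_{22}g_{12}g_{21}.
\end{equation*}
Applying the Kasteleyn condition \eqref{eq:kasteleyn-sign}, $k_{11}k_{22}=-k_{12}k_{21}$, turns the last term into $+k_{12}k_{21}g_{21}g_{12}=P_3P_4$. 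Hence $P_H=P_1P_2+P_3P_4$, and combining with the first step gives \eqref{eq:local}.

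Two points need care, and I expect them to be the main obstacle. The first is justifying \eqref{eq:dpp2} for edges sharing a face. This is Kenyon's local statistics theorem, valid for any set of disjoint edges once $K$ is a Kasteleyn matrix. So the real task is confirming that the phases \eqref{eq:K} satisfy the Kasteleyn condition on every internal face of $G$, not just elementary squares. For the grid these faces are exactly the elementary squares, and since $\Az(n)$ is simply connected there are no other bounded faces. The second point is checking \eqref{eq:kasteleyn-sign} for a black-anchored square: $k_{11}k_{22}=(+1)(-1)=-1$ (right, then left from $b_2$), and $k_{12}k_{21}=(+i)(-i)=1$, as required. For white-anchored squares the relabeling makes the same computation work up to an overall sign, which still gives $k_{11}k_{22}=-k_{12}k_{21}$.

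If one prefers to avoid \eqref{eq:dpp2}, there is a fallback. Kuo's condensation identity for a face gives $Z\,Z'=Z(G-\{b_1,w_1\})Z(G-\{b_2,w_2\})+Z(G-\{b_1,w_2\})Z(G-\{b_2,w_1\})$. Dividing by $Z^2$ and using $P(b\leftrightarrow w)=Z(G-\{b,w\})/Z$ gives the same result directly.
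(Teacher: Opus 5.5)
Your proof is correct and follows the paper's argument. The same domino-rotation bijection gives $P_H=P_V$, and the same expansion of the $2\times2$ determinant in Eq.~\eqref{eq:dpp2}, combined with Eq.~\eqref{eq:single} and the Kasteleyn condition~\eqref{eq:kasteleyn-sign}, turns the cross term into $+P_3P_4$.

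Your extra remarks are consistent with the paper: elementary squares are the only bounded faces of $\Az(n)$, so the phases \eqref{eq:K} are Kasteleyn, and the paper itself notes the Kuo condensation route right after its proof.
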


\begin{proof}
Replacing the two horizontal dominoes of the square by two vertical ones is a
bijection between the tilings containing $H$ and those containing $V$, so
$P_H=P_V$.  For the product formula, let $\kappa_{rs}=K[b_r,w_s]$.  Since
$|\kappa_{rs}|=1$, Eq.~\eqref{eq:single} gives
$K^{-1}[w_s,b_r]=P(b_r\leftrightarrow w_s)/\kappa_{rs}$, and Eq.~\eqref{eq:dpp2}
becomes
\begin{equation*}
P_H=P_1P_2-\frac{\kappa_{11}\kappa_{22}}{\kappa_{12}\kappa_{21}}\,P_3P_4
=P_1P_2+P_3P_4
\end{equation*}
by Eq.~\eqref{eq:kasteleyn-sign}.
\end{proof}

Equivalently, Eq.~\eqref{eq:local} is Corollary~2.2 of Ref.~\cite{Kuo04},
obtained by dividing Kuo's condensation identity
\begin{multline}
M(G)\,M(G-\{a,b,c,d\})=M(G-\{a,b\})\,M(G-\{c,d\})\\
+M(G-\{a,d\})\,M(G-\{b,c\})
\end{multline}
by $M(G)^2$, where $M(\cdot)$ counts perfect matchings.  The identity is local
and holds at finite size on every four-cycle face of every planar bipartite
graph with uniform weights; we confirmed it by exact enumeration, independently
of any Kasteleyn matrix, on chessboards, rectangles, a board with a $2\times2$
hole, and irregular regions.  With edge weights $\nu$ it reads
$P_H=P_1P_2+(\nu_{11}\nu_{22}/\nu_{12}\nu_{21})P_3P_4$.  In terms of edge
indicators it states
$\operatorname{Cov}(\mathbf 1_{b_1w_1},\mathbf 1_{b_2w_2})=P_3P_4\ge0$: opposite
edges of a square are never anticorrelated.  For the computations below its
main use is practical, since it gives $P_H$ on every square from single-edge
probabilities.

\subsection{Central values}\label{subsec:central}

The four edges of the central square, $b_1=(-1,-1)$, $w_1=(-1,0)$,
$w_2=(0,-1)$, $b_2=(0,0)$, are related by the $90^\circ$ rotation of $\Az(n)$
and have a common probability $P_e(n)$.  Hence Eq.~\eqref{eq:local} gives, for
the central square,
\begin{equation}\label{eq:center}
  P_H=P_V=2P_e(n)^2 .
\end{equation}

\begin{theorem}[Sch\"onfelder~\cite{Schonfelder25}]\label{thm:central-edge}
For $p\ge1$,
\begin{equation}\label{eq:central-exact}
P_e(n)=
\begin{cases}
\frac14-2^{-4p}\binom{2p-1}{p}^{2}, & n=4p,\\[3pt]
\frac14+2^{-4p}\binom{2p-1}{p}^{2}, & n=4p+1,\\[3pt]
\frac14, & n=4p+2,\,4p+3.
\end{cases}
\end{equation}
\end{theorem}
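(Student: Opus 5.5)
Since the theorem is attributed to Sch\"onfelder, I would prove it with tools already in the paper plus a standard exact expression for $K^{-1}$ on the Aztec diamond. By Eq.~\eqref{eq:single}, $P_e(n)=K[b_1,w_1]\,K^{-1}[w_1,b_1]$ for the central horizontal edge $b_1=(-1,-1)$, $w_1=(-1,0)$. So it suffices to evaluate one entry of the inverse Kasteleyn matrix at the center. I would use Helfgott's formula~\cite{Helfgott98}, equivalently the generating function of Chhita--Johansson--Young~\cite{CJY15}. For cells near the center it writes $K^{-1}[w,b]$ as a double contour integral
\[
\oint\oint \frac{F_n(z)}{F_n(u)}\,\frac{du\,dz}{z-u},
\qquad
F_n(z)=z^{x}\,\frac{(1+z)^{\cdot}}{(1-z)^{\cdot}}\ \text{(ratios of powers of } (1\pm z)\text{)},
\]
with exponents fixed by $n$ and by the positions of $w$ and $b$. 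Equivalently, it is a finite sum of products of Krawtchouk polynomials $K_k(x;1/2,n)$ evaluated near $x=n/2$.

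The steps would run as follows.
\begin{enumerate}
\item Specialize to the central pair and reduce $K^{-1}[w_1,b_1]$ to the form $\tfrac14+\text{(correction)}$. The $1/4$ comes from the uniform-liquid bulk value, i.e.\ the residue at $z=u$ in the integral.
\item Identify the correction as a single sum, essentially $\sum_k 2^{-n}\binom{n}{k}\,K_k(\lfloor n/2\rfloor)\,K_{k'}(\lfloor n/2\rfloor)$ up to signs, or as a single contour integral of $(1-z^2)^{m}/z^{m+1}$-type.
\item Use the value of Krawtchouk polynomials at the symmetric point $x=n/2$. With $p=1/2$, the generating function $(1-t)^{x}(1+t)^{n-x}$ becomes $(1-t^2)^{n/2}$ when $x=n/2$. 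Hence $K_k(n/2)$ vanishes for odd $k$ and equals $(-1)^{k/2}\binom{n/2}{k/2}$ for even $k$. This is the mechanism that produces a mod-$4$ dependence: the central offsets make the relevant index $\approx n/2$, and its parity, i.e.\ $n \bmod 4$, decides whether the correction survives.
\end{enumerate}

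For $n=4p+2$ and $n=4p+3$ I expect the correction to vanish by parity. The surviving terms have odd index, or the integrand becomes odd under $z\mapsto -z$, in the same way as the $(1-z^2)^{n/2}$ collapse. This gives $P_e=1/4$, consistent with Ciucu's $P_H=1/8$ result~\cite{Ciucu97} via Eq.~\eqref{eq:center}, which serves as a check.

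For $n=4p$ and $n=4p+1$ the correction should reduce to a square of a single central coefficient: $\bigl(2^{-2p}\binom{2p-1}{p}\bigr)^2$, e.g.\ $\bigl([z^{p}](1+z)^{2p-1}\cdot 2^{-2p}\bigr)^2$. The sign is opposite for the two residues; $n=4p$ and $4p+1$ are related by one shuffling step, which flips which central sublattice is favored. Concretely, I would write the double integral at the center as a product of two single integrals (the $1/(z-u)$ kernel separates after extracting the residue term) and evaluate each by the binomial theorem.

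The main obstacle is the bookkeeping of step 1: pinning down exactly the right exponents and orientation conventions in Helfgott's formula for the central edge, with our phases~\eqref{eq:K}, so that the residue split yields precisely $\tfrac14\pm$ a perfect square. I would first fix conventions by matching small cases ($n=1,2,3,4$, where $P_e=1/2$ should follow from $p=0$ reasoning, and $n=4$ gives $1/4-1/16$) against direct enumeration. I would then carry the general computation through, checking the result against the numerics in Fig.~\ref{fig:background}(c).
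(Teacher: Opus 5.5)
The paper itself does not prove this statement. It quotes Sch\"onfelder and only checks Eq.~\eqref{eq:central-exact} numerically against its inverse-Kasteleyn values for $2\le n\le24$. Your plan, an exact evaluation of $K^{-1}[w_1,b_1]$ from Helfgott's (or Chhita--Johansson--Young's) formula, is the natural exact counterpart of that check. As written, though, it is a program rather than a proof, and the step meant to produce the perfect square fails as stated.

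Grant that the residue at $z=u$ yields the $n$-independent full-plane value $\frac14$. The remaining double integral still carries the coupling $1/(z-u)$; extracting a residue does not make the kernel ``separate''. Expanding $1/(z-u)$ as a geometric series just returns a sum over $k$ of products of single coefficients, which is Helfgott's Krawtchouk sum. The symmetric-point evaluation removes at most the odd-index terms, and nothing in your outline collapses the rest to the single product $2^{-4p}\binom{2p-1}{p}^2$.

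To get a product you need an additional identity that cancels the $1/(z-u)$. One option is a Christoffel--Darboux reduction of the Krawtchouk kernel to boundary terms. Another is a combination of entries, such as consecutive values of $n$, whose numerator acquires a factor $z-u$. The target is consistent with the latter, since the increments in $n$ are constant multiples of squares, e.g.\ $P_e(4p+1)-P_e(4p)=\frac12\bigl(2^{-2p}\binom{2p}{p}\bigr)^2$. Without such an identity, the claims that the correction ``should reduce to a square'' and that it vanishes for $n\equiv2,3\pmod4$ are unsupported.

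The parity mechanism of your step~3 also covers only even $n$. At $x=n/2$ the generating function is $(1-t^2)^{n/2}$, its odd coefficients vanish, and the parity of $n/2$ can separate $n=4p$ from $n=4p+2$. For odd $n$ the nearest lattice points $x=(n\mp1)/2$ give $(1-t^2)^{(n-1)/2}(1\pm t)$, none of whose coefficients vanish. So nothing in your argument distinguishes $n=4p+1$ (nonzero correction) from $n=4p+3$ (zero correction). You would have to show that the center selects a specific combination, such as the sum or difference of the values at the two neighboring points. That would restore a factor $(1-t^2)^{(n-1)/2}$ and hence a parity condition on $(n-1)/2$.

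The opposite signs at $n=4p$ and $4p+1$, and the equality of their magnitudes, $P_e(4p)+P_e(4p+1)=\frac12$, are likewise attributed to ``one shuffling step'' without an argument.

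A small point about your sanity checks. Among $n=1,2,3$ only $n=1$ has $P_e=\frac12$, and the formula reproduces it only if $\binom{2p-1}{p}$ is read as $\frac12\binom{2p}{p}$ at $p=0$. Read literally it gives $\frac54$, which is why the statement assumes $p\ge1$. In fact $P_e(2)=P_e(3)=\frac14$.
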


This is the central case of Sch\"onfelder's $1/4$ phenomenon: every fixed
placement probability equals $1/4$ plus a position-dependent rational function
times a factor depending only on $n$~\cite{Schonfelder25}.  In particular
$P_e(4p)+P_e(4p+1)=\frac12$, and $|P_e(n)-\frac14|\sim1/(\pi n)$ for
$n\equiv0,1\pmod4$.  Figure~\ref{fig:background}(c) compares our Kasteleyn
computations with Eq.~\eqref{eq:central-exact} for $2\le n\le24$ (maximal
deviation $1.3\times10^{-13}$), and Fig.~\ref{fig:background}(d) shows the
central $P_H$.  The central square is thus special in two ways: for
$n\equiv2,3\pmod4$ its value is exactly the bulk value $1/8$, and for
$n\equiv0,1\pmod4$ it deviates from $1/8$ in opposite directions.
Section~\ref{sec:results} asks how this central behavior fits into the spatial
profile of $P_H$.  Table~\ref{tab:central} lists exact central values.

\begin{table}[b]
\caption{Central edge and central $2\times2$ probabilities from
Eq.~\eqref{eq:center} and Theorem~\ref{thm:central-edge}.  The $P_H$ column
agrees with the holed-diamond counts of Refs.~\cite{Ciucu97,Schonfelder25}.}
\label{tab:central}
\begin{ruledtabular}
\begin{tabular}{cccc}
$n$ & $n\bmod4$ & $P_e(n)$ & $P_H=P_V$\\
\hline
4  & 0 & $3/16$   & $9/128$\\
5  & 1 & $5/16$   & $25/128$\\
6, 7  & 2, 3 & $1/4$ & $1/8$\\
8  & 0 & $55/256$ & $3025/32768$\\
9  & 1 & $73/256$ & $5329/32768$\\
12 & 0 & $231/1024$ & $53361/524288$\\
13 & 1 & $281/1024$ & $78961/524288$\\
\end{tabular}
\end{ruledtabular}
\end{table}

Since $P_H=M(\Az(n)\setminus\square)/M(\Az(n))$, where $\square$ is the central
$2\times2$ square, these values are equivalent to counts of Aztec diamonds with
a central $2\times2$ hole.  For $n=4p+2$ and $4p+3$ the counts are
$2^{8p^2+10p}$ and $2^{8p^2+14p+3}$~\cite{Ciucu97}, which give $P_H=1/8$.  For
$n\equiv0,1\pmod4$ they are given in Corollary~6.2 of
Ref.~\cite{Schonfelder25}.  For $n=4p$, the middle term of the expression
printed there must carry a minus sign,
\begin{equation}
P_H=\tfrac18-2^{-4p}\tbinom{2p-1}{p}^2+2^{1-8p}\tbinom{2p-1}{p}^4
\qquad (n=4p),
\end{equation}
equivalently $g_{0,0,0}=-4$ instead of $4$ in its proof.  A direct count for
$n=4$ gives $72$ tilings of the holed diamond ($P_H=9/128$), in agreement with
the corrected expression and with Eq.~\eqref{eq:center}; the printed expression
would give $200$ tilings ($P_H=25/128$), a value attained by no $2\times2$
square of $\Az(4)$ (Table~\ref{tab:maxloc}).  More generally, a $2\times2$ hole
at an arbitrary position gives~\cite{Schonfelder25}
\begin{multline}\label{eq:schoenfelder-form}
P_H=\tfrac18+2^{-4p-\alpha-2}\tbinom{2p-1}{p}^2g_1(p)\\
      +2^{-8p-2\alpha}\tbinom{2p-1}{p}^4g_2(p),\qquad n=4p+\alpha,
\end{multline}
with rational functions $g_1$ and $g_2$ that depend on the position.

\section{Results: spatial structure of $P_H$}\label{sec:results}

\subsection{Method}\label{subsec:method}

For each $n$ we construct $K$ from Eq.~\eqref{eq:K}, invert it numerically, and
obtain all single-edge probabilities from Eq.~\eqref{eq:single}.  $P_H$ is then
evaluated on every elementary square by Eq.~\eqref{eq:local}.  As a check, we
evaluated the two-edge determinant~\eqref{eq:dpp2} directly on every square for
$2\le n\le17$; the largest discrepancy was below $10^{-16}$.  The central values
in Table~\ref{tab:central} are exact; all other numbers in this section are
double-precision results.  For chessboards we use the same phases
(Sec.~\ref{subsec:chess}).

\subsection{Spatial profile in $\Az(12)$}\label{subsec:az12}

Figure~\ref{fig:spatial-ph}(a) shows $P_H$ on all $265$ elementary squares of
$\Az(12)$.  Each value is drawn as a unit square centered on its $2\times2$
square, so that the $121$ black-anchored and $144$ white-anchored values do not
overlap.  In the frozen corners $P_H$ is close to zero, as expected in a brick
pattern, and it rises across the Arctic circle.  Inside the liquid region the
profile is not smooth: black- and white-anchored squares form an interleaved
checkerboard with visibly different values.  The central square has
$P_H=0.10178$.  The largest black-anchored value, $0.12381$, occurs on the four
squares diagonally adjacent to it.  The global maximum, $0.14254$, occurs on
eight white-anchored squares at distance $\sqrt{17}\approx4.1$ from the center.

Figure~\ref{fig:spatial-ph}(b) shows the underlying single-edge probabilities.
The frozen corners appear as rows of saturated favored (red) and disfavored
(blue) edges.  Inside the Arctic circle the deviations are smaller and form a
pattern of period two, which produces the sublattice checkerboard in panel (a).

\begin{figure*}[t]
  \centering
  \includegraphics[width=0.9\textwidth]{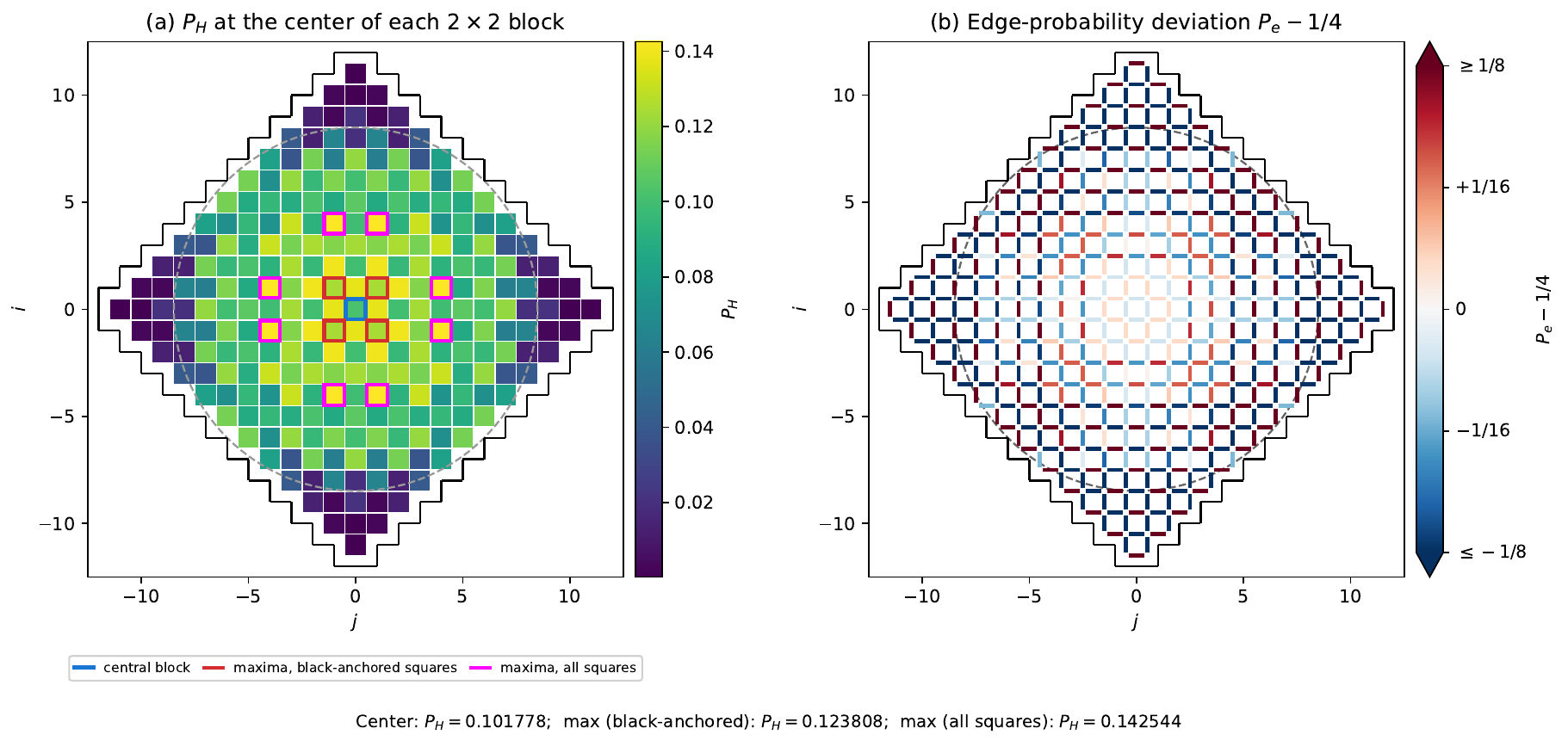}
  \caption{Local probabilities in $\Az(12)$.
    (a)~$P_H$ for every elementary square, drawn as a unit square centered on
    the corresponding $2\times2$ square.  Blue: central square.  Red: the four
    maxima among black-anchored squares.  Magenta: the eight global maxima,
    which are white-anchored.
    (b)~Deviation $P_e-1/4$ of every single-edge probability.  Each edge is
    drawn as a segment joining the centers of its two cells and colored blue
    ($P_e<1/4$), white ($P_e=1/4$), or red ($P_e>1/4$), saturating at $\pm1/8$.
    The staircase line is the boundary of $\Az(12)$ and the dashed circle the
    Arctic circle.}
  \label{fig:spatial-ph}
\end{figure*}

\subsection{Dependence on $n$: a parity rule}\label{subsec:parity}

Figure~\ref{fig:residue} shows the same profile for $\Az(9)$, $\Az(10)$,
$\Az(11)$ and $\Az(12)$, one diamond from each residue class of $n\bmod4$.  The
interleaved checkerboard is present in each case, but the brighter sublattice
near the center alternates.  In $\Az(9)$ and $\Az(11)$ the black-anchored
squares, including the central one, are brighter; in $\Az(10)$ and $\Az(12)$
the white-anchored squares are.

To quantify this, we define the sublattice contrast $\Delta(n)$ as the mean of
$P_H$ over the black-anchored squares whose centers lie within distance $3$ of
the center of $\Az(n)$, minus the same mean over white-anchored squares.
Table~\ref{tab:contrast} shows that
\begin{equation}\label{eq:parity-rule}
\operatorname{sgn}\Delta(n)=(-1)^{n+1}\qquad (4\le n\le22).
\end{equation}
The same signs are obtained when the radius $3$ is replaced by $n/(2\sqrt2)$,
half the radius of the Arctic circle.  The magnitude of $\Delta(n)$ decreases
slowly with $n$, from $0.06$--$0.08$ at $n=4,5$ to $0.012$--$0.018$ at
$n=21,22$.

\begin{figure*}[t]
  \centering
  \includegraphics[width=0.78\textwidth]{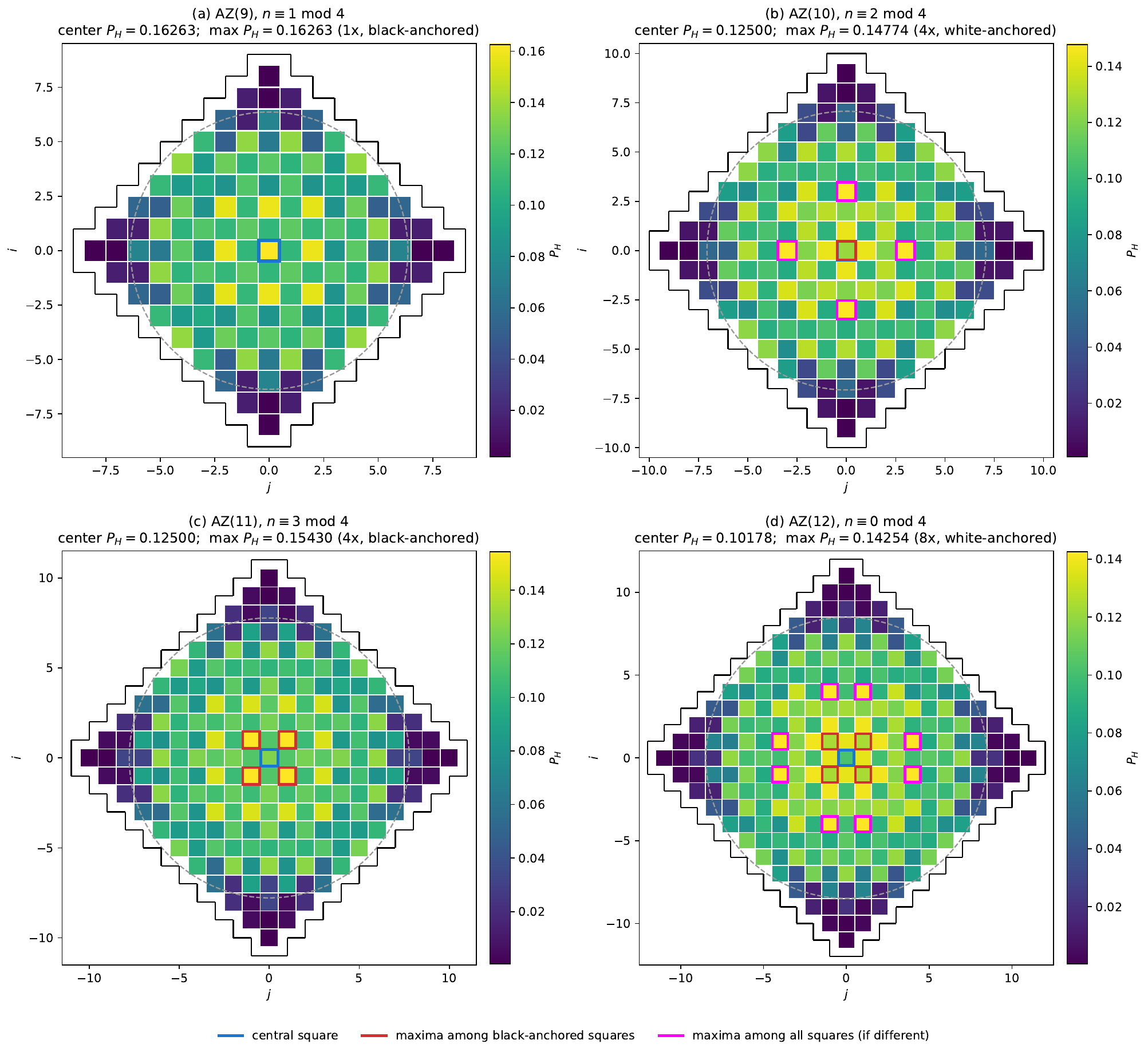}
  \caption{$P_H$ over all elementary squares of (a)~$\Az(9)$, (b)~$\Az(10)$,
    (c)~$\Az(11)$, and (d)~$\Az(12)$, one from each residue class of
    $n\bmod4$, drawn as in Fig.~\ref{fig:spatial-ph}(a).  Blue: central
    square.  Red: maxima among black-anchored squares.  Magenta: global
    maxima, shown when they differ from the black-anchored maxima; in both
    even-$n$ panels they are white-anchored.  Each panel has its own color
    scale.}
  \label{fig:residue}
\end{figure*}

The parity rule decides where $P_H$ is largest (Table~\ref{tab:maxloc}).  Within
the black-anchored class the maximizer follows a mod-4 pattern: the central
square for $n\equiv1,2\pmod4$, and the four diagonal neighbors of the center
for $n\equiv0,3\pmod4$.  For odd $n$ the black-anchored sublattice dominates and
these squares are also the global maxima: a single central peak for $n\equiv1$
and four near-central peaks for $n\equiv3\pmod4$.  For even $n$ the
white-anchored sublattice dominates and the global maximum moves away from the
center.  For $n\equiv0\pmod4$ it consists of eight squares, represented by
$(-5,-2)$ at distance $\sqrt{17}$, for $8\le n\le16$.  For $n\equiv2\pmod4$ it
consists of four squares, represented by $(-4,-1)$ at distance $3$, for
$6\le n\le14$.  The $n\equiv2$ maximizers lie on the symmetry axes and the
$n\equiv0$ maximizers off them, which accounts for the multiplicities $4$ and
$8$.

\begin{table*}[t]
\caption{Maxima of $P_H$ for $4\le n\le17$, separately over black-anchored and
white-anchored squares.  Each maximizer is given by its lexicographically
smallest corner cell $(a,b)$ and the number of symmetry-related squares
attaining the maximum.  The global maximum is shown in bold.}
\label{tab:maxloc}
\begin{ruledtabular}
\begin{tabular}{ccccccc}
 & & \multicolumn{2}{c}{Black-anchored} & \multicolumn{2}{c}{White-anchored} & \\
\cline{3-4}\cline{5-6}
$n$ & $n\bmod4$ & Maximizer & $\max P_H$ & Maximizer & $\max P_H$ & Central $P_H$\\
\hline
4  & 0 & $(-2,-2)\times4$ & 0.117187500 & $(-3,-2)\times8$ & $\mathbf{0.187500000}$ & 0.070312500\\
5  & 1 & $(-1,-1)\times1$ & $\mathbf{0.195312500}$ & $(-3,-2)\times8$ & 0.097656250 & 0.195312500\\
6  & 2 & $(-1,-1)\times1$ & 0.125000000 & $(-4,-1)\times4$ & $\mathbf{0.169921875}$ & 0.125000000\\
7  & 3 & $(-2,-2)\times4$ & $\mathbf{0.171875000}$ & $(-4,-1)\times4$ & 0.108764648 & 0.125000000\\
8  & 0 & $(-2,-2)\times4$ & 0.122528076 & $(-5,-2)\times8$ & $\mathbf{0.149047852}$ & 0.092315674\\
9  & 1 & $(-1,-1)\times1$ & $\mathbf{0.162628174}$ & $(-3,-2)\times8$ & 0.109161377 & 0.162628174\\
10 & 2 & $(-1,-1)\times1$ & 0.125000000 & $(-4,-1)\times4$ & $\mathbf{0.147735596}$ & 0.125000000\\
11 & 3 & $(-2,-2)\times4$ & $\mathbf{0.154296875}$ & $(-4,-1)\times4$ & 0.114692688 & 0.125000000\\
12 & 0 & $(-2,-2)\times4$ & 0.123807907 & $(-5,-2)\times8$ & $\mathbf{0.142543793}$ & 0.101778030\\
13 & 1 & $(-1,-1)\times1$ & $\mathbf{0.150606155}$ & $(-5,-2)\times8$ & 0.114678562 & 0.150606155\\
14 & 2 & $(-1,-1)\times1$ & 0.125000000 & $(-4,-1)\times4$ & $\mathbf{0.139826655}$ & 0.125000000\\
15 & 3 & $(-2,-2)\times4$ & $\mathbf{0.146362305}$ & $(-4,-1)\times4$ & 0.116980148 & 0.125000000\\
16 & 0 & $(-2,-2)\times4$ & 0.124301217 & $(-5,-2)\times8$ & $\mathbf{0.137760954}$ & 0.107006766\\
17 & 1 & $(-1,-1)\times1$ & $\mathbf{0.144390800}$ & $(-5,-2)\times8$ & 0.117416861 & 0.144390800\\
\end{tabular}
\end{ruledtabular}
\end{table*}

The location of the maximum is a more delicate quantity than the sign of
$\Delta(n)$.  For even $n$ the landscape near the center is flat: the three
largest values lie within about $4\%$ of each other at $n=12$ and within
$1.5\%$ at $n=20$.  Additional computations show that the $n\equiv2\pmod4$
maximizer moves further out, to distance $5$ at $n=18$ and $22$, while the
$n\equiv0\pmod4$ maximizer stays at $(-5,-2)$ up to $n=24$.  The class of the
global maximizer followed the parity of $n$ in every case computed
($4\le n\le17$, and even $n$ up to $24$).  The black-anchored maximizer follows the same
mod-4 pattern up to $n=22$.  We do not state a theorem:
Eq.~\eqref{eq:schoenfelder-form} gives the form of $P_H$ at each position, but
locating its maximum would require control of the rational functions $g_1$ and
$g_2$ over the whole diamond.  As $n\to\infty$ the central values approach the
bulk value $1/8$, consistent with the asymptotic local
statistics~\cite{CEP96}.

\subsection{Comparison with the square chessboard}\label{subsec:chess}

Because Eq.~\eqref{eq:local} holds on any planar region, a natural control is
the $L\times L$ chessboard: the same lattice and the same local rule, but a
flat boundary and no frozen region.  We use the phases~\eqref{eq:K} on the cells
$0\le i,j\le L-1$ and write $m=\lfloor L/2\rfloor$ for the half-width, the
analog of $n$.

\textit{Even boards.}---For $L=2m$ the central square $\{m-1,m\}^2$ is
black-anchored and fourfold symmetry again gives $P_H=2P_e(L)^2$.  Exact
enumeration gives $P_e(2)=\tfrac12$, $P_e(4)=\tfrac16$,
$P_e(6)=\tfrac{19}{58}$, $P_e(8)=\tfrac{373}{1802}$,
$P_e(10)=\tfrac{52861}{179786}$, and $P_e(12)=\tfrac{2535297}{11517430}$.  None
of the values we computed (every even $L\le60$ and selected $L$ up to $128$)
equals $1/4$.  The sign of $P_e(L)-\tfrac14$ is $(-1)^{m+1}$, positive for
$L\equiv2$ and negative for $L\equiv0\pmod4$, and its magnitude decays as $1/L$
[Fig.~\ref{fig:chess}(a)].  Numerically, $L(P_e-\tfrac14)\approx+0.42$ and
$-0.40$ at $L\approx60$, and the two branches approach a common magnitude of
about $0.41$.

\textit{Sublattice contrast.}---The contrast $\Delta$, defined as above with
distances measured from the center of the board, obeys the same
rule~\eqref{eq:parity-rule} with $n$ replaced by $m$ (Table~\ref{tab:contrast}).
At equal half-width it is, however, roughly two to six times weaker than in
$\Az(n)$.

\begin{table}[t]
\caption{Sublattice contrast $\Delta$ (mean $P_H$ over black-anchored squares
within distance $3$ of the center minus the mean over white-anchored squares)
for the Aztec diamond $\Az(m)$ and the $2m\times2m$ chessboard.}
\label{tab:contrast}
\begin{ruledtabular}
\begin{tabular}{ccc}
$m$ & $\Az(m)$ & $2m\times2m$ board\\
\hline
4  & $-0.0642$ & $-0.0220$\\
5  & $+0.0845$ & $+0.0319$\\
6  & $-0.0447$ & $-0.0099$\\
7  & $+0.0339$ & $+0.0181$\\
8  & $-0.0420$ & $-0.0073$\\
9  & $+0.0440$ & $+0.0126$\\
10 & $-0.0264$ & $-0.0060$\\
11 & $+0.0217$ & $+0.0097$\\
12 & $-0.0288$ & $-0.0051$\\
13 & $+0.0300$ & $+0.0078$\\
14 & $-0.0187$ & $-0.0045$\\
15 & $+0.0162$ & $+0.0066$\\
\end{tabular}
\end{ruledtabular}
\end{table}

\textit{Spatial profile.}---Figures~\ref{fig:chess}(b) and \ref{fig:chess}(c)
show $P_H$ on the $20\times20$ and $22\times22$ boards.  The profile is
dominated by the boundary.  For every $4\le L\le30$ the global maximum occurs
on the four corner squares, with $P_H\to0.3024$.  Away from the corners $P_H$
oscillates with period two lattice spacings, and the oscillation decays from
the boundary toward the center.  Whether the center falls on a crest or a
trough is decided by the parity of $m$, which is the origin of the sign rule on
the chessboard.

\textit{Odd boards.}---For odd $L$ the board has no domino tiling and its
center is a cell.  The symmetric substitute is the board with its central cell
$(m,m)$ removed, i.e., a single monomer at the center~\cite{FS63}.  Such boards
have been studied from the enumerative side~\cite{Kong06,Tenner05,ByunGoddard25},
and a mobile vacancy in the square dimer model was analyzed in
Ref.~\cite{BBGJ07}.  The phases~\eqref{eq:K} violate the Kasteleyn condition on
the eight-cycle face around the hole.  Flipping the sign of every horizontal
edge that crosses the lattice line $x=m+1$ above the hole restores it; we
checked the resulting edge probabilities against exact enumeration for
$L=3,5,7,9$.  The probability $P_{\mathrm{rad}}$ that the cell directly above
the hole is matched further upward alternates with $L\bmod4$: about $0.43$ for
$L\equiv1$ and $0.39$ for $L\equiv3\pmod4$ at $L\approx40$, with the branches
converging slowly [Fig.~\ref{fig:chess}(a), right axis].  The global maximum
of $P_H$ is again at the corners for every odd $L\le29$.

\begin{figure*}[t]
  \centering
  \includegraphics[width=\textwidth]{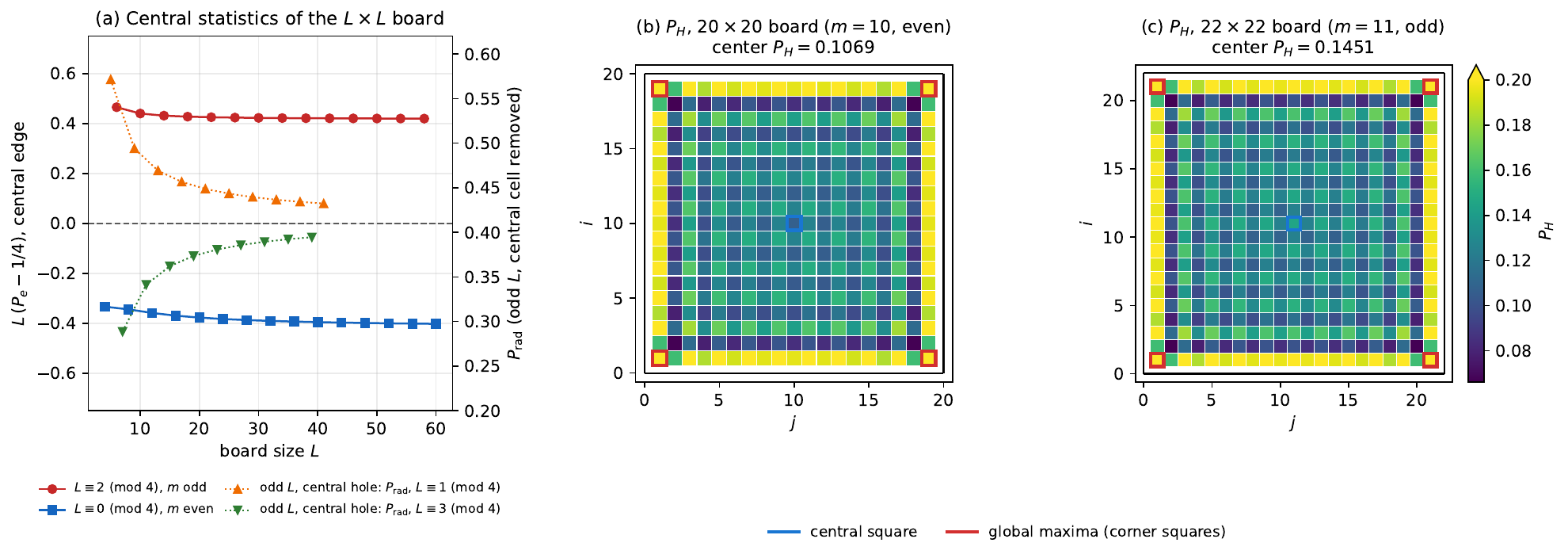}
  \caption{Comparison with the $L\times L$ chessboard.
    (a)~Left axis: scaled deviation $L(P_e-1/4)$ of the central edge
    probability of even boards, grouped by $L\bmod4$.  Right axis: radial edge
    probability $P_{\mathrm{rad}}$ next to the removed central cell of odd
    boards, grouped by $L\bmod4$.
    (b),~(c)~$P_H$ over all elementary squares of the $20\times20$ ($m$ even)
    and $22\times22$ ($m$ odd) boards, drawn as in Fig.~\ref{fig:spatial-ph}(a).
    The color scale is clipped at $0.20$; only the four corner squares exceed
    it (red outlines, $P_H\approx0.302$).  Blue outlines mark the central
    square.}
  \label{fig:chess}
\end{figure*}

\section{Discussion}\label{sec:discussion}

The chessboard comparison shows that the parity rule itself is not specific to the
Aztec geometry.  On both regions the sign is set by the parity of the
half-width, which suggests a boundary-induced even-odd effect carried to the
center by the period-two oscillation of the dimer correlations.  What is
specific to the Aztec diamond is the strength and the consequences of this
effect.  The contrast is several times larger, it is strong enough to move the
global maximum of $P_H$ into the interior of the liquid region, and half of the
residue classes have central values exactly at $1/4$ and $1/8$.  On the
chessboard the contrast is weak, the maxima are pinned to the corners, and no
central value is exactly $1/4$.  The frozen boundary and the Arctic circle are
the natural candidates for this difference.

The mod-4 pattern of the black-anchored maximizer in Table~\ref{tab:maxloc}
has a simple reading in terms of the height function~\cite{Thurston90}.  We fix
$h=0$ at the west corner and let $h$ change by $+1$ along a domino boundary and
by $-3$ across a domino when the cell on the left of the step is black, with the
opposite signs when it is white.  The rotation by $90^\circ$ about the center
maps $\Az(n)$ to itself and exchanges the two colors, so it maps the height
function of each tiling to minus that of the rotated tiling, up to a constant
fixed by the boundary.  Averaging over the uniform measure gives exactly
\begin{equation}\label{eq:mean-height}
\langle h(0,0)\rangle=(-1)^{n+1}n,
\end{equation}
the mean of the west and south corner heights.  Since the height of a vertex is
determined modulo $4$ by its position, the mean central height lies at a
definite offset above this residue: $0$ for $n\equiv0,3$ and $2$ for
$n\equiv1,2\pmod4$.  The four placements of the basketweave tiling, in which the
parallel-domino squares occupy one class of corner cells $(a,b)\bmod2$, are
flat states (Fig.~\ref{fig:plateaus}) whose mean heights, measured from the same residue, are
$0,1,2,3\pmod4$.  The two black-anchored placements take the even values: $2$
for the placement containing the central square and $0$ for the one containing
its diagonal neighbors.  In all four residue classes ($4\le n\le22$) the
black-anchored maximizer belongs to the placement whose height equals the
pinned mean central height.  Because the offset is always even, the mean height
does not account for the sign rule~\eqref{eq:parity-rule}.

Its curvature does.  Let $h_0$ be the mean height at the central vertex of an
elementary square and $h_E,h_W,h_N,h_S$ those at its four neighbors, and let
$x_d=\sigma_d(h_d-h_0)$, where $\sigma_d=\pm1$ is the color of the cell on the
left of the step toward $d$.  The four edges inside the square have
probabilities $(1-x_d)/4$, and Eq.~\eqref{eq:local} becomes exactly
\begin{equation}\label{eq:curvature}
P_H=\frac18-\frac{c\,K}{16}+\frac{x_Nx_S+x_Ex_W}{16},
\end{equation}
where $K=h_E+h_W-h_N-h_S$ and $c=+1$ for black-anchored and $c=-1$ for white-anchored squares.  The
linear term is proportional to the saddle curvature $K$ of the mean height and
has opposite signs on the two sublattices.  It reproduces $\Delta(n)$ to within
$10\%$ for $5\le n\le22$ and on the chessboards of Table~\ref{tab:contrast}, so
the sign rule is a statement about the sign of $K$ near the center.  In the
Aztec diamond this sign has a global origin.  The corner heights are $0$ at the
west and east corners and $(-1)^{n+1}2n$ at the south and north corners, so
the boundary imposes a saddle whose orientation alternates with the parity of
$n$.  The simplest such profile, $\bar h=(-1)^{n+1}\bigl[n+(y^2-x^2)/n\bigr]$,
matches these values and Eq.~\eqref{eq:mean-height} and gives
$\Delta(n)\approx(-1)^{n+1}/(2n)$: the sign of Eq.~\eqref{eq:parity-rule}, the
observed decrease with $n$, and the magnitude to within a factor of two.  On the
chessboard the four corner heights coincide, so there is no such saddle and the
curvature near the center comes only from the period-two oscillation that
decays inward from the boundary (Sec.~\ref{subsec:chess}), consistent with the
weaker contrast found there.

\begin{figure}[tbp]
  \centering
  \includegraphics[width=\columnwidth]{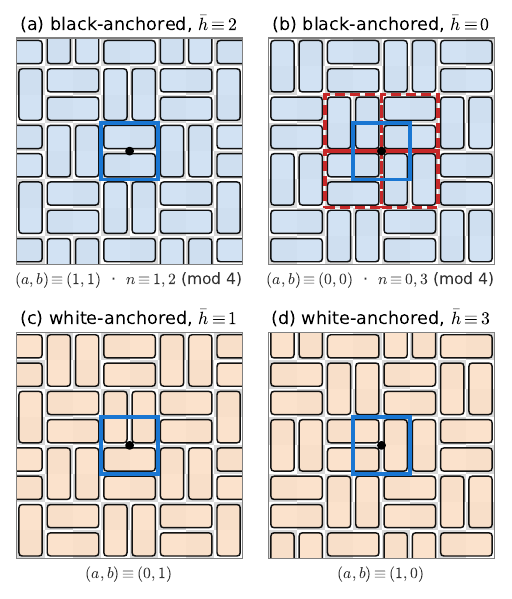}
  \caption{The four placements of the basketweave tiling near the center,
    labeled by the class $(a,b)\bmod2$ of the corner cells of their $2\times2$
    blocks, with their mean height modulo $4$ measured from the residue of the
    central vertex (dot).  Blue: central square.  Red dashed: its four diagonal
    neighbors.  The mean central height of $\Az(n)$, Eq.~\eqref{eq:mean-height},
    lies on plateau (a) for $n\equiv1,2$ and on plateau (b) for $n\equiv0,3\pmod4$;
    these are the placements that contain the black-anchored maximizers of
    Table~\ref{tab:maxloc}.}
  \label{fig:plateaus}
\end{figure}

From the point of view of statistical mechanics, the identity is a constraint
on nearest-neighbor dimer correlations that holds everywhere, in the frozen and
in the liquid region alike.  The sublattice structure is a finite-size
correlation effect of the kind probed in constrained systems such as frustrated
magnets and artificial spin ice, whose low-energy states map onto dimer
coverings~\cite{Henley10,NMS13,Lao18}.  Exact finite-volume data of this type
may serve as benchmarks for such systems.

Several questions remain open.  It would be interesting to derive the parity
rule~\eqref{eq:parity-rule} analytically and to identify the ingredient beyond the mean height that it requires, for instance from the structure of
$g_1$ and $g_2$ in Eq.~\eqref{eq:schoenfelder-form}.  One would also like to
locate the maximum of $P_H$ for large even $n$, to determine the decay of
$\Delta(n)$, and to derive the constant in the $1/L$ decay of the chessboard
central statistics.  Condensation identities for larger faces~\cite{Kuo04}
could extend the present analysis to larger local patterns.

\section{Conclusion}\label{sec:conclusion}

The $2\times2$ correlations of the Aztec diamond have a spatial structure that
the known central results do not reveal.  The elementary squares form two
interleaved sublattices, and the parity of the diamond decides which of them
dominates the center.  For even $n$ this drives the most probable local
arrangement away from the center, and both patterns are read off from the mean
height function: its curvature selects the dominant sublattice, and its pinned
central value the preferred square within it.  The parity rule is shared by the square
chessboard, but its strength and its consequences in the interior are features
of the Aztec geometry.

The shape of a region thus decides not only where its frozen phases lie, but
also which local arrangement is favored deep inside the liquid region.  These
exact finite-size results offer a reference for constrained systems in which
such plaquettes can be imaged directly, such as artificial spin ice, and a
starting point for an analytic theory of the parity rule.

\begin{acknowledgments}
This research was supported by the National Research Foundation (NRF) of Korea
funded by the Korean Government (NRF-2023R1A2C1006050).
The authors used Claude (Anthropic) to assist with writing code for the
numerical computations and figures and with editing the manuscript; all
results were checked by the authors, who take full responsibility for the
content.
\end{acknowledgments}

\section*{Data availability}
The data that support the findings of this article, and the Python scripts that
generate all figures and tables (including an exact-enumeration check of
Eq.~\eqref{eq:local} on regions other than the Aztec diamond), are openly
available at \url{https://github.com/cywon-commits/2x2_superlattice_domino_tiling}.

\bibliography{aztec_pre}

\end{document}